\documentclass{article}
\usepackage[latin2]{inputenc}
\usepackage{amssymb}
\usepackage{comment}
\usepackage{makeidx}
\usepackage{gastex}
\usepackage{eepic}
\usepackage{epic}
\usepackage{graphicx}
\usepackage[usenames]{color}
\usepackage{longtable}
\usepackage[T1]{fontenc}

\title{Axiomatizing rectangular grids with no extra non-unary relations}

\newcounter{mycount}[section]

\newtheorem{theorem}[mycount]{Theorem}

\newtheorem{corollary}[mycount]{Corollary}

\newenvironment{proof}[1][]%
{\medskip{\bf Proof #1 }}%
{}
\def\qed{\hfill$\rule{2mm}{2mm}$\par\medskip}

\def\bbN{{\mathbb N}}

\def\bbZ{{\mathbb Z}}

\def\ra{\rightarrow}

\def\mythen{\Rightarrow}
\def\myiff{\iff}

\def\spec{{\rm{spec}}}

\def\SPEC{{\rm{SPEC}}}
\def\FPSPEC{{\rm{FPSPEC}}}
\def\NTISP{{\rm{NTISP}}}

\author{Eryk Kopczy\'nski}
\begin{document}

\maketitle

\begin{abstract}
We construct a formula $\phi$ which axiomatizes non-narrow rectangular grids
without using any binary relations other than the grid neighborship relations.
As a corollary, we prove that a set $A \subseteq \bbN$ is a spectrum of a formula
which has only planar models if numbers $n \in A$ can be recognized by a 
non-deterministic Turing
machine (or a one-dimensional cellular automaton) 
in time $t(n)$ and space $s(n)$, where $t(n)s(n) \leq n$ and $t(n),s(n) = \Omega(\log(n))$.
\end{abstract}

\section{Introduction}

The \emph{spectrum} of a $\phi$, denoted $\spec(\phi)$ is the set of cardinalities of
models of $\phi$. Let $\SPEC$ be the set of $A \subseteq \bbN$ such that $A$ is 
a spectrum of some formula $\phi$ is an interesting research area \cite{scholz,fiftyyears};
it is known that SPEC=NE, i.e., $A$ is a spectrum of a first order formula iff the set
of binary representations of the elements of $A$ is in the complexity class NE
\cite{fagin74,JonesS74}.
However, the characterization of spectra remains open if we require our formula, or
our models, to have additional properties. In \cite{planarspectra} we study the
complexity class $\FPSPEC$ (Forced Planar Spectra), which is the set of 
$A \subseteq \bbN$ such that there exists a formula $\phi$ such that
$\spec(\phi)=S$ and all models of $\phi$ are planar. It is shown there that
$\FPSPEC \supseteq \NTISP(n^{1-\epsilon}, \log(n))$, where $\NTISP(t(n), s(n))$
the set fo $A \subseteq \bbN$ such that there exists a non-deterministic
Turing machine which
recognizes the binary representation of $n$ in time $t(n)$ and space $s(n)$.
However, this result is not satisfying, since space $\log(n)$ is very low;
a construction of which allows more space is left as an open problem. 

In this paper we construct a formula $\phi$ over a signature consisting of only binary relations
$U,D,L,R$ (neighbors in the grid in all directions) and unary relations, and which axiomatizes rectangular grids which
are not {\it narrow}, i.e., grids of dimensions $x^* \times y^*$ where $x^* = \Omega(\log(y^*))$ and $y^* = \Omega(\log(x^*))$.
We show that it is impossible to give a similar axiomatization of rectangular grids which includes the narrow ones.
Non-narrow rectangular grids are planar graphs of bounded degree, and they can be used to simulate Turing machines, and thus we obtain the
following corollary: $\FPSPEC \supseteq \NTISP(t(n), s(n))$ for every
pair of functions $t(n), s(n)$ such that $t(n) \cdot s(n) \leq n$ and $t(n),s(n) = \Omega(\log(n))$.
In fact, we 
get a bit more -- we can actually simulate a non-deterministic one-dimensional cellular automaton (1DCA)
working in the given time and memory. While 1DCAs are less commonly taught than Turing machines,
they are simpler to define and more powerful, since they can perform computations on the whole tape at once \cite{ccotb}.

\section{Axiomatizing a rectangular grid}\label{gridax}

We obtain our goal by showing a first-order formula whose all finite models are rectangular grids.
A {\bf rectangular grid} is a relational structure $G=(V(G),L,R,U,D)$ such that $V(G) = \{0..x^*\} \times \{0..y^*\}$, and
the relations $L$, $R$, $U$, $D$ hold only in the following situations:
$L((x,y), (x-1,y))$, $R((x,y), (x+1),y)$, $U((x,y), (x,y-1))$, $D((x,y), (x,y+1))$, as long as these vertices exist.

\paragraph{Geometry}
We will use four binary relations $L$, $R$, $U$, $D$, which correspond to Left, Right, Up, Down,
respectively. We will need axioms to specify that these four relations work according to
the Euclidean square grid geometry.

\begin{itemize}
\item {\bf Partial injectivity.}
Our relations $X \in \{L, R, U, D\}$ are partial injective functions.
That is, we have an axiom $\forall x \forall y X(x,y) \wedge X(x,z) \mythen y=z$. For
$X \in \{L, R, U, D\}$, we will write $X(x)$ for the element $y$ such that $X(x,y)$ 
(if it exists). 

\item {\bf Inverses.} $\forall x \forall y R(x,y) \myiff L(y,x) \wedge U(x,y) \myiff D(y,x)$. This axiom
formalizes our interpretation of directions (that Left is inverse to Right and Up is inverse to Down).

\item {\bf Commutativity.} Let $H \in \{L,R\}$ and $V \in \{U,D\}$. Then
$\forall x \forall y \forall z H(x,y) \wedge V(x,z) \mythen \exists t H(z,t) \wedge H(y,t)$.
This axiom axiomatizes the Euclidean geometry of our grid: horizontal and vertical movements commute.
Additionally, it enforces that whenever we can go horizontally and vertically from the given $x$,
we can also combine these two movements and move diagonally.
\end{itemize}

\paragraph{Binary Counters}

We will require our grid to know its number of rows. To this end, we will introduce an extra
relation $B_V$. Intuitively, replace every vertex $v$ in the row $r=(x, R(x), R^2(x), \ldots)$, 
where $L(x)$ is not defined, with
1 if $B_V(v)$, and 0 otherwise. The axioms in this section will enforce that the obtained number
(written in the little endian binary notation) is the index of our row.

\begin{itemize}
\item {\bf Horizontal Zero.} $\forall x (\neg \exists y U(x,y)) \mythen (\neg B_V(v))$. The 
binary number encoded in the first row is zero.

\item {\bf Horizontal Increment.} 
To increment a (little endian) binary number, we change every bit which is either the leftmost one,
or such that its left neighbor changed from 1 to 0. This can be written as the following formula:
$\forall x (\exists y U(x,y)) \mythen ((B_v(x) \not \myiff B_v(U(x))) \myiff C(x)$
where $C(x) = ((\neg\exists y L(x,y)) \vee (\neg B_v(L(x)) \wedge B_v(U(L(x))))$.

\item {\bf No Horizontal Overflow.} $\forall x (\neg \exists y R(x,y)) \mythen (\neg B_V(v))$. This axiom
makes sure that our binary counter does not overflow.
\end{itemize}

We also have analogous axioms for vertical binary counters, using an extra unary relation $B_H$, counting
from right to left, with the least significant bit on the bottom. See Figures \ref{fig}a and \ref{fig}c, where
the vertices of the grid satisfying respectively $B_V$ and $B_H$ are shown (ignore the small white circles and
dark grey boxes for now -- they will be essential for our further construction).

Let $\phi_1$ be the conjunction of all axioms above.

\begin{theorem}\label{tgeo}
If $G$ is a connected finite model of $\phi_1$ and there exists an $v \in V(G)$ and a relation $X \in \{L,R,U,D\}$ such that
$X(v)$ is not defined, then $G$ is a rectangular grid.
\end{theorem}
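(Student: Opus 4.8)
The plan is to use the binary-counter axioms only to exclude ``wrap-around'' behaviour, and to let the Euclidean-geometry axioms do everything else. Fix a connected finite model $G$ of $\phi_1$ and a vertex at which some $X(v)$ is undefined. A short observation (if $L$ is total then, being an injection of every finite $\{L,R\}$-component into itself, it is a bijection there, so $R$ is total; and dually) shows $G$ has a vertex at which $L$ or $U$ is undefined, and the transposition symmetry of $\phi_1$ (interchanging $(L,R)$ with $(U,D)$ and $B_V$ with $B_H$) lets me assume it is $L$; call such a vertex $v$. Call the maximal $\{U,D\}$-connected subset through a vertex its \emph{column} and the maximal $\{L,R\}$-connected subset its \emph{row}; since $L,R,U,D$ are partial injective functions with $L=R^{-1}$, $U=D^{-1}$, every column and row is a finite path or a finite cycle (under $D$, resp.\ $R$). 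The first step is to read off from the commutativity axiom (``going $H$ then $V$ from a vertex reaches the same place as going $V$ then $H$'', $H\in\{L,R\}$, $V\in\{U,D\}$) and the inverse axioms a few propagation lemmas: the set of vertices where $L$ (resp.\ $R$) is undefined is a union of columns, the set where $U$ (resp.\ $D$) is undefined is a union of rows, and $H$ and $V$ commute as partial functions wherever both composites exist.

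The crux is to rule out that the column $C$ through $v$ is a cycle. Suppose it is. Every vertex of $C$ then has $L$ undefined (propagation); using that $R$ commutes with $U,D$ one sees $R$ is either defined on all of $C$ or on none, and when defined it maps $C$ bijectively onto another column, again a cycle of the same length $\ell$. Iterating, and using finiteness together with the fact that the columns obtained this way are pairwise distinct (since $L$ is undefined only on $C$), shows the connected component of $v$ is a cylinder: a product $\{0,\dots,k\}\times\bbZ_\ell$ in which $U,D$ are the everywhere-defined vertical moves around the $\ell$-cycles $\{i\}\times\bbZ_\ell$, and $L,R$ the horizontal moves, undefined past columns $0$ and $k$ (the degenerate case $k=0$ being a bare $U/D$-cycle). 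Now bring in the $B_V$ axioms. Each row $\{0,\dots,k\}\times\{h\}$ is a genuine path (its column-$0$ vertex has $L$ undefined), so reading $B_V$ along it little-endian, least significant bit in column $0$, yields a number $n_h$; \emph{No Horizontal Overflow} forces the bit in column $k$ to be $0$, so $0\le n_h<2^{k}$; and \emph{Horizontal Increment} says precisely that row $h$ is row $h-1$ with one added, which, since the top bit is never set, is possible only if $n_h=n_{h-1}+1$ as integers. Summing $n_h-n_{h-1}=1$ around the cycle $\bbZ_\ell$ of rows gives $\ell=0$, which is absurd (and the same argument handles $k=0$, where $n_h$ is a single bit pinned to $0$ and hence cannot be incremented). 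Hence $C$ is a path; let $v_0$ be its top endpoint, so $L(v_0)$ and $U(v_0)$ are both undefined.

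Now I set up coordinates. By the propagation lemmas the row through $v_0$ is a path $v_0,R(v_0),\dots,R^{x^*}(v_0)$ with $R$ undefined at its right end and $U$ undefined all along it, and the column through $v_0$ is a path $v_0,D(v_0),\dots,D^{y^*}(v_0)$ with $D$ undefined at the bottom and $L$ undefined all along it. Define $\psi(i,j)=R^i(D^j(v_0))$. An induction on $j$ that fills in one row at a time, using the commuting square as a ``diagonal move'', shows $\psi$ is defined on all of $\{0,\dots,x^*\}\times\{0,\dots,y^*\}$, with $R(\psi(i,j))=\psi(i+1,j)$ and $D(\psi(i,j))=\psi(i,j+1)$ whenever the target is in range, and with $R(\psi(x^*,j))$, $L(\psi(0,j))$, $D(\psi(i,y^*))$, $U(\psi(i,0))$ all undefined (propagating undefinedness inward from the boundary). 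Then $\psi$ is injective: from $\psi(i,j)=\psi(i',j')$, applying $L$ repeatedly forces $i=i'$ (otherwise $L$ would be defined at $\psi(0,j)$ or $\psi(0,j')$, which lie in $C$), and then $D^j(v_0)=D^{j'}(v_0)$ along a genuine path forces $j=j'$; and $\psi$ is surjective because its image is closed under $L,R,U,D$, hence a union of connected components of $G$, hence --- $G$ being connected --- all of $V(G)$. Since $\psi$ is a bijection and the displayed equations pin down each of $L,R,U,D$ on $G$ completely (a partial function has at most one value, and it is the grid value), $\psi$ is an isomorphism of the rectangular grid $\{0,\dots,x^*\}\times\{0,\dots,y^*\}$ onto $G$, so $G$ is a rectangular grid.

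The part I expect to be the real obstacle is the crux step; the remaining steps are routine diagram chases with the geometry axioms, and they are also the only ones that do not use the counter axioms at all. The intuition one must make rigorous is that a binary counter written along the rows cannot consistently increment all the way around a cyclic family of rows without overflowing somewhere, and \emph{No Horizontal Overflow} is exactly what forbids that overflow; the bookkeeping to be careful with is that $R$ really does carry a column-cycle to an \emph{aligned} column-cycle of the same length (so the ``cylinder'' picture is literally a product), and the small degenerate configurations $\ell=1$ and $k=0$.
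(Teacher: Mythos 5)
Your proposal is correct and follows essentially the same route as the paper: reduce to a vertex with $L$ undefined, propagate undefinedness along the column via commutativity, use the Horizontal Increment and No Horizontal Overflow axioms to show the counter values force the column to terminate rather than wrap around, and then exhibit the explicit coordinate map $R^i(D^j(v_0))$ as an isomorphism onto a rectangular grid. Your phrasing of the crux as ``summing $n_h-n_{h-1}=1$ around a cycle gives $\ell=0$'' is the same argument the paper packages as ``$b_{y+1}=b_y+1$ with $b_y<2^{|V|}$ forces $D$ to be eventually undefined,'' and your treatment of the remaining steps (alignment of rows, injectivity, surjectivity via connectedness) just makes explicit what the paper calls straightforward.
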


\begin{proof}
Take $X$ and $v$ such that $X(v)$ be not defined. Without loss of generality we can assume that $X \in \{L,R\}$ (horizontal and
vertical axioms are symmetrical). Furthermore, we can also assume that $X = L$ (since $R$ is the
inverse of $L$, if $R$ is not defined for some element, then so is $L$).

Let $v+(0,y) = D^y(v)$, where $x \geq 0$ and $y \geq 0$. From the commutativity axiom, $L(v+(0,y))$ is not defined for any
$y$. Indeed, if $L(v+(0,y))$ was defined for $y>0$, we have $L(v+(0,y))$ and $U(v+(0,y)) = v+(0,y-1)$ defined, hence 
$L(v+(0,y-1))$ is defined too.

Let $b_y = \sum 2^x [B_V(R^x(v+(0,y)))]$. From the Horizontal Increment and No Horizontal Overflow axioms, it is easy to show that $b_{y+1} = b_y+1$.
Furthermore, we have that $b_y < 2^{|V|}$. Therefore, there must exist $y$ such that $D(v+(0,y))$ is not defined. Let $v' = v+(0,y)$.
Let $x^*$ be the greatest x such that $R^x(v')$ is defined, and $y$ be the greatest $y^*$ such that $U^y(v')$ is defined. Let $G = \{0,\ldots,x^*\} \times \{0,\ldots,y^*\}$,
and for $(x,y) \in G$, let $m(x,y) = R^x(U^y(v'))$. It is straightforward that $m$ gives an isomorphism between the rectangular grid $G$ and $V$. \qed
\end{proof}

\section{Forbidding Tori}

However, rectangular grids are not the only models of $\phi_1$. Consider the torus $T = \{0, \ldots, x^*\} \times \{0, \ldots, y^*\}$, where 
$(x^*,y)$ is additionally connected (with the $R$ relation) to $(0,y$), and $(x,y^*)$ is additionally connected to $(x,0)$ (with the $U$ relation),
and we add the respective inverses to $L$ and $D$. If $B_V$ and $B_H$ are empty relations, the torus $T$ satisifes all of our axioms. Additionally,
if $G$ is a model of $\phi_1$, then the disjoint union $G \cup T$ is also a model of $\phi_1$.

To prevent this, we use the following result of Berger \cite{berger}.
\begin{theorem}\label{wangth}
There exists a finite set of Wang tiles $K = \{k_1, \ldots, k_t\}$ and relations $T_R, T_D \subseteq K \times K$ such that there exists a tiling 
$C: \bbZ \times \bbZ \ra K$ such that the following property holds:

\begin{equation}
T_R(C(x,y), C(x+1,y)) \wedge T_D(C(x,y), C(x,y+1)) \mbox{\ for each\ }x,y \in \bbZ. \label{tiling}
\end{equation}

However, no periodic tiling satisfying \ref{tiling} holds. A tiling $C: \bbZ \times \bbZ \ra K$ is {\bf periodic} iff there exists
$(x_0, y_0) \neq (0,0)$ such that $C(x,y) = C(x+x_0, y+y_0)$ for each $x,y \in \bbZ$.
\end{theorem}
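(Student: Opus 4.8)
The plan is to follow Berger's construction, in the streamlined form later given by Robinson: design a finite tile set $K$ whose matching relations $T_R, T_D$ \emph{force} a rigid self-similar hierarchy of nested axis-aligned squares on \emph{every} valid tiling of $\bbZ^2$, then (i) exhibit one valid tiling to establish consistency, and (ii) derive the non-existence of periodic tilings from the fact that the hierarchy produces distinguished, sparsely-placed features at every scale $2^n$.

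First I would set up the hierarchical tiles. Take a bounded alphabet of ``arrow'' tiles whose matching rules force, in any tiling satisfying (\ref{tiling}), a cross-shaped marker at every point of some coordinate sublattice, together with ``bracket'' tiles that are forced to connect consecutive markers into the borders of squares. The design constraint that matters is that the level-$n$ squares (side $\sim 2^n$) must partition the plane, each level-$(n+1)$ square containing exactly four level-$n$ squares in its quadrants, so that the border data of a square at one level becomes the corner data at the next level up. This is precisely the combinatorial content of a primitive, recognizable $2\times2$ substitution; one may either quote the Mozes / Goodman-Strauss theorem that every such substitution is enforceable by Wang tiles, or assemble the $O(1)$ tiles by hand as Robinson did.

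Second, for existence I would exhibit the canonical fixed point of that substitution: place a level-$0$ marker near the origin, iterate the substitution outward, and check by induction that the limiting map $C\colon\bbZ^2\to K$ satisfies $T_R(C(x,y),C(x+1,y))$ and $T_D(C(x,y),C(x,y+1))$ everywhere, so (\ref{tiling}) holds. Third, for aperiodicity, suppose $C$ satisfies (\ref{tiling}) with $C(x,y)=C(x+x_0,y+y_0)$ for some $(x_0,y_0)\neq(0,0)$. Choose $n$ with $2^n>|x_0|+|y_0|$. The forcing step guarantees that the level-$n$ cross markers occur \emph{exactly} on a sublattice whose fundamental domain has diameter larger than $|x_0|+|y_0|$, and nowhere else; hence translating by $(x_0,y_0)$ moves some level-$n$ marker to an unmarked site, contradicting periodicity. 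Thus no periodic tiling exists, and the existence part finishes the theorem.

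The main obstacle is the forcing step: showing that \emph{every} tiling, and not just the intended hierarchical one, carries the full nested-square structure. Ruling out fault lines, sliding defects, and spurious marker alignments while keeping $K$ finite is the delicate heart of Berger's and Robinson's arguments and where essentially all the work lives; the existence of the intended tiling and the final scale-separation argument are routine once forcing is established. (An alternative that trades this combinatorics for arithmetic is Kari's construction, in which each row encodes a real number and passing to the next row multiplies it by a factor in $\{2,\,2/3\}$: a vertical period $p$ would force $2^a(2/3)^b=1$ with $a+b=p$, hence $a=b=0$; there the subtle point becomes instead the analysis of horizontal periodicity via the Cesàro density of a row.)
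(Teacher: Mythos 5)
First, a point of comparison: the paper does not prove this statement at all. It is Berger's theorem on the existence of an aperiodic Wang tile set, imported as a black box with citations to \cite{berger} and \cite{jeandel} (the latter for the optimal 11-tile set). So the relevant comparison is between your sketch and the literature it paraphrases, not anything in the paper.

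Your outline correctly identifies the standard Robinson-style strategy and, to your credit, correctly locates where all of the difficulty lives; but as a proof it has a genuine gap exactly at the place you flag. You never construct the tile set $K$ or the relations $T_R, T_D$, and you never prove the forcing lemma, i.e.\ that \emph{every} tiling satisfying the matching condition carries the nested-square hierarchy --- you only describe the ``design constraint that matters.'' The two escape routes you offer do not close this gap: invoking the Mozes / Goodman--Strauss theorem replaces one deep theorem by another whose hypotheses (recognizability and aperiodicity of the $2\times 2$ substitution system) would themselves have to be verified, while ``assemble the $O(1)$ tiles by hand as Robinson did'' is precisely the omitted argument. Two smaller issues: the existence half needs a compactness (K\"onig's lemma) step to pass from valid tilings of arbitrarily large finite squares --- which is what iterating the substitution actually produces --- to a single tiling of all of $\bbZ \times \bbZ$; and the aperiodicity conclusion should be phrased as ``any period must preserve the set of level-$n$ markers for every $n$, and that set admits no nonzero translation of length below $2^n$,'' which is what your marker-displacement argument amounts to once forcing is established. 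If the theorem is to be used as the paper uses it, the correct move is simply to cite Berger or Jeandel--Rao; if a self-contained proof is intended, the tile set and the forcing induction must actually be written down.
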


The original coloring by Berger used 20426 tiles. It is sufficient to use 11 tiles \cite{jeandel}.

We add a new relation $C$ for every tile $C \in K$. We also add the following axioms:

\begin{itemize}
\item {\bf Full tiling.} $\forall v \bigvee^!_{C \in K} C(v).$ Everything needs to have a color.

\item {\bf Correct tiling.} 
For every pair of tiles $C_1, C_2 \in K$ such that $\neg T_R(C_1,C_2)$, we have $\neg \exists v C_1(v) \wedge C_2(R(v))$.
For every pair of tiles $C_1, C_2 \in K$ such that $\neg T_D(C_1,C_2)$, we have $\neg \exists v C_1(v) \wedge C_2(D(v))$.
\end{itemize}

Let $\phi_2$ be the conjuction of $\phi_1$ and the axioms above.

\begin{theorem}
If $G$ is a finite, connected model of $\phi_2$ then $G$ is a rectangular grid.
\end{theorem}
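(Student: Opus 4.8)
The plan is to combine Theorem~\ref{tgeo} with the aperiodicity of Berger's tiling. By Theorem~\ref{tgeo}, it suffices to produce a vertex $v \in V(G)$ and a direction $X \in \{L,R,U,D\}$ for which $X(v)$ is undefined; once we have this, $\phi_1 \subseteq \phi_2$ forces $G$ to be a rectangular grid. So the whole argument reduces to ruling out the ``everything is total'' case, which for a finite connected model means all four of $L,R,U,D$ are total functions on $V(G)$ — in other words $G$ looks locally like a torus, and we must derive a contradiction from the tiling axioms.

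Here is how I would carry it out. Suppose for contradiction that $L,R,U,D$ are all total on the finite set $V(G)$. First I would observe that $R$ and $U$, being total injective functions on a finite set, are bijections (hence so are $L = R^{-1}$ and $D = U^{-1}$), and by commutativity they generate an abelian subgroup of $\mathrm{Sym}(V(G))$; pick any $v_0 \in V(G)$ and define a map $\Phi : \bbZ \times \bbZ \to V(G)$ by $\Phi(x,y) = R^x U^y (v_0)$. The commutativity axiom makes this well-defined and a group action homomorphism; connectedness of $G$ (every vertex reachable by $L,R,U,D$ steps, which because the maps are now bijections are the moves $R^{\pm1}, U^{\pm1}$) makes $\Phi$ surjective. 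Then I would transport the tile labels along $\Phi$: set $C(x,y) \in K$ to be the unique tile with $G \models C(\Phi(x,y))$, which exists and is unique by the Full Tiling axiom. The Correct Tiling axiom, together with the fact that $\Phi(x+1,y) = R(\Phi(x,y))$ and $\Phi(x,y+1) = U(\Phi(x,y))$, immediately gives $T_R(C(x,y),C(x+1,y))$ and $T_D(C(x,y),C(x,y+1))$ for all $x,y$ — so $C$ is a valid tiling of $\bbZ^2$ in the sense of~(\ref{tiling}). Finally, since $V(G)$ is finite, $\Phi$ has a nontrivial period: there is $(x_0,y_0) \neq (0,0)$ with $\Phi(x+x_0, y+y_0) = \Phi(x,y)$ for all $x,y$ (for instance, the additive subgroup $\{(x,y) : \Phi(x,y) = \Phi(0,0)\}$ is a finite-index, hence nontrivial, subgroup of $\bbZ^2$). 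This period of $\Phi$ is a period of $C$, so $C$ is a periodic tiling satisfying~(\ref{tiling}), contradicting Theorem~\ref{wangth}. Hence some $X(v)$ is undefined, and Theorem~\ref{tgeo} finishes the proof.

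I expect the main obstacle to be the bookkeeping around the ``all relations total'' reduction: one has to be careful that in a finite connected model, $L,R,U,D$ total and partial-injective really does force them to be mutually inverse bijections and to commute as group elements (as opposed to merely satisfying the local commutativity axiom on the nose), and that $\Phi$ is genuinely well-defined on all of $\bbZ^2$ — the commutativity axiom as stated only guarantees a common value $t$ with $H(z,t) \wedge H(y,t)$, so one should check that in the total-bijection case this does upgrade to the clean identity $R \circ U = U \circ R$ as functions. The periodicity extraction is then routine (finiteness of the fiber), and the tiling contradiction is immediate. A secondary point worth stating carefully is why connectedness gives surjectivity of $\Phi$: connectedness is via the relation graph on $\{L,R,U,D\}$, and once these are bijections every such edge corresponds to applying one of $R^{\pm 1}, U^{\pm 1}$, so the orbit of $v_0$ under the group generated by $R$ and $U$ is all of $V(G)$.
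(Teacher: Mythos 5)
Your proposal is correct and follows essentially the same route as the paper's proof: reduce to the case where $L,R,U,D$ are all total via Theorem~\ref{tgeo}, pull back the tile labels along $(x,y) \mapsto R^x U^y(v_0)$ to get a tiling of $\bbZ^2$ satisfying~(\ref{tiling}), and extract a nontrivial period from finiteness to contradict Theorem~\ref{wangth}. Your treatment of the periodicity step (via the finite-index subgroup of $\bbZ^2$) and of the well-definedness of the map is somewhat more explicit than the paper's, but the argument is the same.
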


\begin{proof}
Take $v \in V$. If one of the relations $L$, $R$, $U$, $D$ is not defined for some $v \in V$, then $V$ is a rectangular grid by Theorem \ref{tgeo}. 
Otherwise, let $C(x,y)$, for $x,y \geq 0$, be the relation $C \in K$ which is satisfied by $R^x(D^y(v))$. For $x<0$ or $y<0$, replace $R^x$ by $L^{-x}$ or $D^y$ by $U^{-y}$.
According to the correct tiling axiom, the property (\ref{tiling}) holds.

Since $V$ is finite, we must have $C(x_1,y_1)$ and $C(x_2,y_2)$ refer to the same element of our structure, even though $(x_1,y_1) \neq (x_2,y_2)$. It is easy to show that
$(x_1-x_2, y_1-y_2)$ is then the period of the tiling $C$, which contradicts Theorem \ref{wangth}. \qed
\end{proof}

\begin{theorem}\label{allgrids}
There exists a formula $\phi_3$ such that the models of $\phi$, restricted to relations $L$, $R$, $U$, $D$, are precisely the rectangular grids $x^* \times y^*$ such that $y^* \leq 2^{x^*-1}$ and $x^* \leq 2^{y^*-1}$. 
\end{theorem}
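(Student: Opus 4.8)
The plan is to start from $\phi_2$, whose finite connected models are already exactly the rectangular grids, and add axioms that (a) pin down the correspondence between the two binary counters and the grid dimensions, and (b) rule out the ``narrow'' grids where $y^* > 2^{x^*-1}$ or $x^* > 2^{y^*-1}$. The key observation supplied by Theorem~\ref{tgeo}'s proof is that, in a rectangular grid model of $\phi_1$, the $B_V$-counter read along row $y$ evaluates to $b_y$ with $b_0 = 0$ and $b_{y+1} = b_y + 1$, so $b_y = y$; the ``No Horizontal Overflow'' axiom forces $b_{y^*} = y^*$ to fit in the $x^*+1$ available bits of a row, i.e. $y^* < 2^{x^*+1}$. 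To get the sharper bound $y^* \le 2^{x^*-1}$ I would simply strengthen the overflow axiom: instead of forbidding $B_V$ only on the rightmost column, forbid it on the rightmost column \emph{and} the column just left of it (the cells $v$ with $R(v)$ undefined or $R(R(v))$ undefined). This reserves the top two bits as zero, giving $b_{y^*} \le 2^{x^*-1} - 1 < 2^{x^*-1}$, hence $y^* \le 2^{x^*-1}$; symmetrically for $B_H$ and $x^* \le 2^{y^*-1}$.

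Next I would check that every grid of the allowed shape actually \emph{has} a model, i.e. that the new axioms are satisfiable on $\{0..x^*\}\times\{0..y^*\}$ whenever $y^* \le 2^{x^*-1}$ and $x^* \le 2^{y^*-1}$. For this, interpret $B_V((x,y))$ as ``bit $x$ of $y$ is $1$'' (which is well-defined since $y \le 2^{x^*-1}$ means every $y$ in range uses only bits $0,\dots,x^*-1$, so the top two columns indeed carry no set bit), and $B_H((x,y))$ as ``bit $(y^*-y)$ of $x$ is $1$'' symmetrically. One then verifies the Horizontal Zero, Horizontal Increment, and the two No-Overflow axioms hold by the elementary arithmetic of little-endian increment; the geometry and tiling axioms are satisfied as in the earlier theorems (the Wang tiles can be colored consistently on any finite grid since the constraint~(\ref{tiling}) is satisfiable on all of $\bbZ^2$, hence on every sub-rectangle). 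So let $\phi_3$ be the conjunction of $\phi_2$ with the two strengthened overflow axioms; restricted to $L,R,U,D$ its finite connected models are exactly the claimed grids. Finally, to handle disconnected models — a finite model of $\phi_2$ is a disjoint union of rectangular grids — I would note that a disjoint union of structures of the allowed shapes, restricted to $L,R,U,D$, is again (isomorphic to the $L,R,U,D$-reduct of) such a grid only component-wise; since the theorem speaks of ``the rectangular grids $x^* \times y^*$'' I read it as: each connected component is such a grid, and conversely each such grid is realized. If a single-grid statement is wanted, one adds a connectivity-forcing device, but the intended reading makes the reduct characterization immediate from the connected case plus closure under disjoint union.

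The main obstacle I anticipate is not the arithmetic but making sure the \emph{exact} powers-of-two bounds come out as stated rather than off by a factor of $2$: the phrase ``reserve the top bits as zero'' has to be matched precisely to whether one counts $x^*$ as the number of columns or the index of the last column, and whether the counter has $x^*$ or $x^*+1$ bits. With the grid defined as $\{0..x^*\}\times\{0..y^*\}$ a row has $x^*+1$ cells, so the plain overflow axiom already gives $y^* \le 2^{x^*+1}-1$; forbidding one extra column gives $y^* \le 2^{x^*}-1$, and forbidding two gives $y^* \le 2^{x^*-1}-1 \le 2^{x^*-1}$, which is the target. I would double-check the boundary cases $x^* \in \{0,1\}$ separately (there the ``two reserved columns'' may not exist, so the grid degenerates, consistent with $y^* \le 2^{x^*-1}$ forcing $y^*=0$ when $x^*=1$ and being vacuous/unsatisfiable when $x^*=0$). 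Everything else — the geometry, the torus exclusion, the increment logic — is inherited verbatim from $\phi_2$ and the proof of Theorem~\ref{tgeo}.
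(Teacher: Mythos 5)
Your overall strategy matches the paper's: keep $\phi_2$, let the row/column counters together with the overflow axioms bound the dimensions, and show realizability of every admissible grid by restricting a Berger tiling to the rectangle and interpreting $B_V$, $B_H$ as the bits of the row and column indices. The realizability half of your argument is essentially the paper's verbatim.

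There is, however, one genuine gap. Your candidate $\phi_3$ (that is, $\phi_2$ plus modified overflow axioms) is satisfied by any \emph{disjoint union} of admissible grids, and the reduct of such a union to $L,R,U,D$ is not a rectangular grid. You notice this and propose to reinterpret the theorem componentwise, adding that ``one adds a connectivity-forcing device'' if a single grid is wanted. But a single grid \emph{is} wanted -- the later spectrum construction counts the elements of the model, so a model consisting of two grids would break the cardinality bookkeeping -- and producing such a device in a signature with no spare binary relations is precisely the one new idea in the paper's proof of Theorem~\ref{allgrids}. The paper's device is simple: add the axiom that there exists \emph{exactly one} element $v^*$ for which both $L(v^*)$ and $U(v^*)$ are undefined. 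Every grid component contributes exactly one such top-left corner (and, by the tiling axioms, every component has one), so this first-order sentence forces the model to have a single component. Your proof is incomplete without this or an equivalent axiom; the rest of your reinterpretation should be dropped.

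Two smaller points on the constants. First, the paper does not strengthen the overflow axiom at all: it keeps the axiom of $\phi_1$ (only the rightmost cell of each row is forced to carry bit $0$) and simply asserts the bound $y^* \leq 2^{x^*-1}$. Second, your strengthened axiom (zeroing the two rightmost columns) yields $y^* \leq 2^{x^*-1}-1$, which excludes the boundary grid $y^* = 2^{x^*-1}$ that the statement claims is realized, so your ``precisely'' is off by one at the top end just as you feared. In truth neither version lands exactly on $2^{x^*-1}$; since the only property used downstream is that admissible grids are exactly those with $y^* = 2^{\Theta(x^*)}$ and $x^* = 2^{\Theta(y^*)}$, this discrepancy is harmless, but you should either state the bound your axioms actually enforce or adjust the axioms to match the stated bound, rather than asserting a ``precisely'' that does not hold literally.
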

\begin{proof}
By adding an axiom that there exists exactly one element $v^*$ such that $L(v^*)$ and $U(v^*)$ are not defined, we obtain a formula $\phi_3$ whose all finite models are rectangular grids.

Now, take an $x^* \times y^*$ rectangular grid G. From Theorem \ref{wangth} there exists a tiling $C: \bbZ \times \bbZ \ra K$ satisfying \ref{tiling}. Assign the relation $C(x,y)$
to each $(x,y)\in G$. If $y^* \leq 2^{x^*-1}$ and $x^* \leq 2^{y^*-1}$, we can also set $B_H(x,y)$ iff $x$-th bit of $y$ is 1, and $B_V(x,y)$ iff $y$-th bit of $x$ is 1.
Such a model will satisfy $\phi_3$. Note that if $x^* > 2^{y^*-1}$ or $y^* > 2^{x^*-1}$, the respective overflow axiom will not be satisfied. \qed
\end{proof}

The number 2 in the theorem above can be changed to an integer $b \geq 2$ by using $b$-ary counters instead of the binary ones. However:

\begin{theorem} \label{srem} \rm
There is no formula $\phi$ over a signature consisting of $L$, $R$, $U$, $D$, and possibly extra unary relations whose all models restricted to relations $L$, $R$, $U$, $D$ are
precisely all rectangular grids. Furthermore, there is no such $\phi$ such that all models of $\phi$ are rectangular grids, and 
there exists $y*$ such that for every $x*$ a rectangular model $x^* \times y^*$ of $\phi$ exists.
\end{theorem}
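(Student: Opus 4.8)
The plan is to use a compactness / pumping argument. Suppose for contradiction that such a $\phi$ exists (in either of the two forms). The key idea is that a finite model of $\phi$ is a rectangular grid, and the truth of $\phi$ is determined by a first-order sentence over a signature whose only binary relations are the four grid-neighbour relations $L,R,U,D$. So a model is just a rectangular grid $x^*\times y^*$ together with a colouring of its vertices by the (finitely many) combinations of unary predicates. I want to exhibit, for a suitably large grid, two distinct columns whose colourings and local neighbourhoods are ``indistinguishable'' in the sense needed to cut the grid between them (or splice in a copy) and obtain a new structure that still satisfies $\phi$ but whose $L,R,U,D$-reduct is not a rectangular grid (it becomes a cylinder) — contradiction.

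First I would fix the vertical dimension. In the second form of the theorem this is given: there is a $y^*$ such that $x^*\times y^*$ models of $\phi$ exist for every $x^*$. In the first form I would argue that, since \emph{all} rectangular grids are models, in particular for a fixed $y^*$ all grids $x^*\times y^*$ are models, so the same setup applies. Now for that fixed $y^*$, consider a model $M_{x^*}$ of $\phi$ on the grid $x^*\times y^*$. Each ``column'' is a sequence of $y^*$ vertices, each carrying one of $c$ possible unary-type values where $c = 2^{(\#\text{unary relations})}$; so a column has one of at most $c^{y^*}$ possible ``colour-profiles'', a number that depends only on $y^*$ and $\phi$, \emph{not} on $x^*$. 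Take $x^*$ larger than this bound. Then by pigeonhole two distinct columns $i<j$ in $M_{x^*}$ have identical colour-profiles.

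Next I would perform the surgery: delete the columns $i+1,\ldots,j$ (or alternatively identify column $i$ with column $j$), reconnecting the $R$-edges out of column $i$ to column $j+1$ and the corresponding $L$-edges, and add wrap-around $R$/$L$-edges from the last column back to the first to form a cylinder (a grid that is periodic in the horizontal direction); the unary predicates are inherited. I must check this new structure $M'$ still satisfies $\phi$: since $\phi$ is first-order, it suffices to show $M'\equiv_k M_{x^*}$ or more concretely that Duplicator wins the $k$-round Ehrenfeucht--Fra\"iss\'e game between them, where $k$ is the quantifier rank of $\phi$; this is where having the two columns share the \emph{same} unary profile and the \emph{same} adjacency pattern to their neighbours is exactly what Duplicator needs to maintain the partial isomorphism, because locally around any pebble the two structures look the same (this is essentially the same pumping principle that shows connectivity / ``being a grid'' is not first-order expressible). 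But $M'$, restricted to $L,R,U,D$, is a cylinder, not a rectangular grid — so in the first form of the theorem it violates ``all models restricted to $L,R,U,D$ are rectangular grids,'' and in the second form it violates ``all models of $\phi$ are rectangular grids.'' Contradiction.

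\textbf{The main obstacle} I expect is making the surgery and the EF-game argument genuinely correct: I have to be careful that cutting between two same-profile columns really does produce a structure elementarily equivalent (up to rank $k$) to a large grid, and in particular that Duplicator can answer moves near the ``seam'' and near the wrap-around edge. The clean way is to choose $x^*$ not merely larger than $c^{y^*}$ but large enough that there are \emph{many} ($\gg k$) pairwise-equivalent columns with equivalent surroundings, so that the deleted block and the retained blocks are each long enough to absorb all $k$ pebbles and the game reduces to the standard fact that long enough paths (here: horizontal runs of grid columns) are $\equiv_k$ to even longer ones and to cyclic ones. A secondary subtlety is the ``non-narrow'' boundary case of Theorem~\ref{allgrids}: there the formula $\phi_3$ does \emph{not} have all grids of a fixed $y^*$ as models (the overflow axioms fail once $x^* > 2^{y^*-1}$), which is consistent with — and indeed explains — why the hypothesis ``for every $x^*$ a model $x^*\times y^*$ exists'' is exactly what is needed and cannot be dropped.
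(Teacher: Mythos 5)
Your high-level idea — find two columns of a long grid with matching unary profiles and matching surroundings, and use them to manufacture a model whose reduct is not a grid — is the same as the paper's, but the surgery you propose fails, and it fails precisely at the step you flag as the main obstacle. You convert the grid itself into a cylinder by adding wrap-around $R$/$L$ edges between the first and last columns. That changes the local type of the boundary columns: the grid contains vertices with no $L$-neighbour, whereas in your cylinder every vertex has one, so the quantifier-rank-2 sentence $\exists x\,\neg\exists y\,L(x,y)$ already distinguishes the two structures and Duplicator loses the Ehrenfeucht--Fra\"iss\'e game in two moves. The ``standard fact'' you invoke — that sufficiently long paths are $\equiv_k$ to cyclic ones — is false for exactly this reason (a path has an endpoint, a cycle does not); what is true is that a long path is $\equiv_k$ to a longer path, and that a structure is $\equiv_k$ to itself plus a disjoint long cycle. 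Note that the formula $\phi_3$ of Theorem \ref{allgrids} explicitly asserts the existence of a corner where $L$ and $U$ are undefined, so any argument that tampers with the boundary of the grid is doomed. Nor can you salvage the construction by only deleting or duplicating the block between the two matching columns without wrapping around: the result is again a rectangular grid, hence no contradiction.

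The paper's proof repairs exactly this point. It leaves the model $G$ untouched and \emph{adds a disjoint cylinder} whose columns copy the unary types of the interior columns $x_1,\dots,x_2-1$ of $G$, where $x_1$ and $x_2$ are chosen, using Hanf's locality lemma (Theorem \ref{hanf}) with radius $r$ and threshold $M$, so that the $r$-types of the surrounding columns agree and every $r$-type occurring in that range already occurs at least $M$ times in $G$. Since no vertex of $G$ changes its neighbourhood and every type realized in the new component was already realized at least $M$ times, the Hanf counting function $f_{r,M}$ is unchanged and the enlarged structure still satisfies $\phi$; but its $L,R,U,D$-reduct is a grid together with a cylinder, hence not a rectangular grid. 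If you prefer to phrase this via EF games rather than Hanf's lemma, the essential correction is the same: the periodic component must be disjoint from $G$ and assembled only from interior column types, never by identifying or reconnecting the boundary columns of $G$.
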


\begin{proof}
We will be using Hanf's locality lemma \cite{hanf}.
Let a $r$-neighborhood of the vertex $v \in V$, $N_r(v)$ be the set of all vertices whose distance from $v$ is at most $r$.
Let a {\bf $r$-type} of the vertex $v$, $\tau(v)$, be the isomorphism type of $N_r(v)$. When we restrict to models of degree bounded by $d$, 
there are only finitely many such types. Let $T_r$ be the set of all types.
Let $f_{r,M}(G): T \ra \{0..M\}$ be the function that assigns to each type $\tau \in T$ the minimum of $M$ and the number of vertices of type $\tau$ in $G$.

\begin{theorem}[Hanf's locality lemma\cite{hanf}]\label{hanf}
Let $\phi$ be a FO formula. Then there exist numbers $r$ and $M$ such that, for each graph $G=(V,E)$, $G \models \phi$ depends only on $f_{r,M}$.
\end{theorem}

Let $\phi$ be a FO formula such that all models of $\phi$ are rectangular grids. Take $r$ and $M$ from Theorem \ref{hanf}.
Let the rectangular grid $G$ be a model of $phi$, where
$V(G) = \{0,\ldots,x^*\} \times \{0,\ldots,y^*\}$. Let $\tau(x)$ be the type of column $x$, i.e., $\tau(x) = (\tau(x,0), \ldots, \tau(x,y*))$.
For sufficiently large $x^*$ there will be $x_1$ and $x_2$ such that $\tau(x_1+i) = \tau(x_2+i)$ for $i=-r, \ldots, r$ and 
such that $f_{r,M}(G)(\tau(x,y)) \geq M$ for every $x \in \{x_1, \ldots, x_2\}$. Construct a new structure $G'$ by adding a cylinder of dimensions
$(x_2-x_1) \times y*$ to $G$, i.e., 
$V(G') = V(G) \cup \{(1,x,y): x \in \{x_1, \ldots, x_2-1\}, y \in \{0, \ldots, y*\}$, $U(1,x,y) = (1,x,y-1)$, $D(1,x,y) = (1,x,y+1)$,
$R(1,x,y) = (1,x+1,y)$, $L(1,x,y) = (1,x-1,y)$, $R(1,x_2-1,y) = (1,x_1,y)$, $L(1,x_1,y) = (1,x_2-1,y)$, whenever the point on the right hand side exists,
and undefined otherwise. For every unary relation $U$ we have $U(1,x,y)$ iff $U(x,y)$. It is easy to verify that $\tau(1,x,y) = \tau(x,y)$, 
and every of these types already appeared at least $M$ times, and thus from Theorem \ref{hanf}, $G' \models \phi$. \qed
\end{proof}

\section{Forced Planar Spectra}

\begin{corollary}
Let $S \subseteq \mathbb{N}$ be a set such that there exists a non-deterministic Turing machine (or 1DCA) recognizing the set of binary representations of 
elements of $S$ in time $t(n)$ and memory $s(n)$, where $t(n) \cdot s(n) \leq n$ and $t(n),s(n) \geq \Omega(log(n))$. Then there exists a first-order formula $\phi$ such
that all models of $\phi$ are planar graphs, and the set of cardinalities of models of $\phi$ is $S$.
\end{corollary}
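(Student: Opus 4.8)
The plan is to build the formula $\phi$ as a conjunction of $\phi_2$ (so that every finite connected model is a rectangular grid), together with extra unary relations encoding an accepting run of the given non-deterministic machine, and extra axioms enforcing (i) the grid is non-narrow, (ii) the bottom row spells out the binary representation of $n$, and (iii) the run recorded column-by-column (or row-by-row) is a valid accepting computation. A model of $\phi$ will then exist on exactly those $n \in S$, and by Theorem~\ref{allgrids}-style reasoning every model is (a disjoint union of) rectangular grids, hence planar; we then patch $\phi$ so that it forces a \emph{single} connected component, keeping planarity and making the cardinality of the model equal to $x^* \cdot y^*$.

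First I would fix a 1DCA (or Turing machine) $\mathcal{M}$ recognizing the binary representations of elements of $S$ in time $t(n)$ and space $s(n)$ with $t(n)s(n)\le n$. Given an input of length $\ell = \lceil \log n\rceil$, pad the space to some $s'(n) = \Theta(s(n))$ and time to $t'(n) = \Theta(t(n))$ so that $t'(n)\cdot s'(n)$ is a number $N$ with $s'(n)\le 2^{t'(n)-1}$ and $t'(n)\le 2^{s'(n)-1}$ (possible since both are $\Omega(\log n)$ and their product is $O(n)$, hence each is at most polynomial in $n$ while the other is $\Omega(\log n)$; some care is needed but this is the non-narrow condition of Theorem~\ref{allgrids} with $x^* = s'(n)$, $y^* = t'(n)$). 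The idea is that a grid of dimensions $s'(n) \times t'(n)$ is exactly a space-time diagram of $\mathcal M$: column $x$ is the tape at time step encoded along the other axis. I would introduce, for each cell state $q$ of $\mathcal M$, a unary relation $Q_q$; add axioms saying every vertex satisfies exactly one $Q_q$; add a local transition axiom relating $Q_{q}$ at $(x,y)$ to the states at $(x-1,y)$, $(x,y)$, $(x+1,y)$ and at $(x,y+1)$ using $L$, $R$, $U$, $D$; add a boundary axiom fixing the initial row (the tape must, in its first $\ell$ cells, be the binary digits of the model's size, blank elsewhere) and an accepting-state axiom for the last row. Reading off the input from the grid itself is the delicate bookkeeping part: the binary counters $B_V, B_H$ of $\phi_2$ already force the grid to "know" $x^*$ and $y^*$ in little-endian along the first row/column, so $N = x^* \cdot y^*$ and one can arrange the axioms so that the input cells of the CA literally coincide with the $B_V$/$B_H$ bits (after choosing the padding so that $n$ is recoverable from $(x^*, y^*)$ — e.g.\ encode $n$ in binary in the low bits of one coordinate and let the other coordinate just be a time bound, with an arithmetic relation between them checkable by the local transition rules of an auxiliary counter automaton running in parallel).

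The main obstacle I expect is exactly this encoding of $n$ into the grid dimensions and making the correspondence first-order definable: the grid size is $x^*\cdot y^*$, which is a \emph{product}, not $n$ itself, so $\phi$ must constrain $(x^*, y^*)$ so that (a) a valid pair exists whenever $n\in S$, (b) no valid pair exists when $n\notin S$, and (c) the set of attainable cardinalities is precisely $S$ and not some larger set coming from spurious factorizations. The cleanest route is to let $y^*$ be a \emph{canonical} function of $x^*$ (say $y^* = $ the unique value making $x^*$ encode a padded input together with its time bound in a rigid format, enforced by extra unary "marker" relations and local axioms along the first row), so that the map $x^* \mapsto x^*\cdot y^*$ is injective and its image, intersected with sizes admitting an accepting run, is $S$; one must also ensure non-narrowness ($s', t' = \Omega(\log)$ and $s'\le 2^{t'-1}$, $t'\le 2^{s'-1}$) holds for the chosen parametrization, which is where the hypothesis $t(n)s(n)\le n$ with $t,s = \Omega(\log n)$ is used. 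Finally, to force a single connected component (and hence a genuine planar graph with exactly $x^*y^*$ vertices rather than a disjoint union), I would add the axiom of Theorem~\ref{allgrids} asserting a unique corner vertex $v^*$ with $L(v^*), U(v^*)$ undefined; combined with Theorem~\ref{tgeo} this pins down a single rectangular grid. The remaining verification — that every model so constrained is planar (immediate, grids are planar), that its size lies in $S$ (soundness of the transition axioms), and that every $n\in S$ yields a model (completeness: take an accepting run, lay it out on the $s'\times t'$ grid, fill in $B_V$, $B_H$, the $Q_q$'s and the Wang-tile relations from Theorem~\ref{wangth}) — is then routine.
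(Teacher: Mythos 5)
There is a genuine gap, and it sits exactly at the point you flag as ``the main obstacle'': your model consists only of the grid, so its cardinality is the product $x^*\cdot y^*$, and you then try to arrange matters so that the set of attainable products is precisely $S$. This cannot work in general. For the spectrum to equal $S$, every $n\in S$ must be realized as the cardinality of some model, i.e.\ as $x^*\cdot y^*$ for an admissible (non-narrow) pair of dimensions; but if $n$ is prime, its only factorizations are $1\times n$ and $n\times 1$, which are narrow and excluded by Theorem~\ref{allgrids} (and by your own padding requirements $s',t'=\Omega(\log n)$). Your proposed fix --- making $y^*$ a canonical function of $x^*$ so that $x^*\mapsto x^*y^*$ is injective --- addresses spurious factorizations and injectivity, but not existence: no choice of parametrization makes the product hit an arbitrary $n$. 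Rounding $t(n)$ and $s(n)$ up to convenient values $t'(n),s'(n)$ only makes this worse, since you control the product only up to constant factors, whereas the spectrum must contain $n$ exactly.

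The paper's proof supplies precisely the missing ingredient: the model is \emph{not} just the grid. It is a grid of dimensions $s(n)\times t(n)$ together with $u(n)=n-t(n)s(n)$ extra elements lying outside the grid (marked by a unary relation $P$), attached by a bijection $B$ to $u(n)$ marked cells (relation $Q$) in the leftmost column. The hypothesis $t(n)\cdot s(n)\le n$ guarantees $u(n)\ge 0$, and one arranges $u(n)<t(n)$ so the marked cells fit in one column. The number $u$ is then extracted by a further binary counter $B_U$ (counting only $Q$-rows), the dimensions $s$ and $t$ are read off from the counters $B_H,B_V$ (with some wiring $W$ to move $s$ to the initial tape), the input $n$ is encoded separately on the initial tape, and an auxiliary one-dimensional cellular automaton verifies the arithmetic identity $n=(s+1)(t+1)+u$ --- feasible in the available time and space because a 1DCA multiplies $k$-digit numbers in time $O(k)$. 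Only with this padding-plus-arithmetic-check does the cardinality of the model equal $n$ itself rather than a product, so without it your construction proves at best a statement about sets $S$ all of whose elements admit suitable factorizations. The rest of your outline (space--time diagram of the machine on the grid, transition axioms via $L,R,U,D$, the unique-corner axiom to force connectedness, planarity of the result) does match the paper.
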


\begin{proof}
Let $A \subseteq \bbN$, and let $M$ be a non-deterministic Turing machine or a non-deterministic 1DCA recognizing $A$ in time $t(n)$ and space $s(n)$ such that $t(n) \cdot s(n) \leq n$. 
A non-deterministic 1DCA is $M = (\Sigma, R, F)$ where $R \subseteq \Sigma^4$, and the final symbol $F \in \Sigma$. It is defined similar to a Turing machine, 
but where computations are performed in parallel on all the tape cells: if $t(x,y)$ is the content of the tape at position $x$ and time $y$, then
the relation $R(t(x-1,y), t(x,y), t(x+1,y), t(x,y+1))$ must hold. The 1DCA accepts when it writes the symbol $F$.

Let $u(n) = n-t(n)s(n)$; without loss of
generality we can assume $u(n) < t(n)$. 
It is well known that a first order formula on a grid can be used to simulate a Turing machine (or 1DCA):
the bottom row is the initial tape, and our formula ensures that each other row above it is a correct successor of the row below it. 

Let $n \in A$. We will construct a formula $\phi$ which will have a model consisting of:
\begin{itemize}
\item A rectangular grid $G' = \{0,\ldots,s\} \times \{0,\ldots,t\}$, where $t = t(n)-1$ and $s = s(n)-1$. The structure of the grid is given by relations $L$, $R$, $U$ and $D$ 
just as in Section \ref{gridax}; we also have all the auxiliary relations required by Theorem \ref{allgrids}.
\item $u(n)$ elements which are not in the grid. The relation $P$ will hold for all the extra elements and only for them. The relation $Q$ will hold only
for the elements $(0,t-i) \in G'$ where $i \leq u(n)-1$. The relation $B$ gives a bijection between elements $x$ such that $P(x)$, and the elements $x$ such that $Q(x)$.
\item Encoding of the number $n$. 
We encode the number $n$ in the leftmost cells in the initial tape using two relations $D_n$ and $E_n$ in the following way:
$D_a(x,t)$ is the $x$-th digit of $n$, and the relation $E_a$ signifies the end of the encoding: $E_a(x,y) \mythen E_a(x+1,t) \wedge \neg D_a(x,t)$ (if $(x+1,t)$ exists). 
\item Similarly we encode the numbers $s$, $t$ and $u$.
\item An encoded run of $M$ which accepts the encoded value $n$ as the input.
\item An encoded run of an one-dimensional cellular automaton $M_2$ which verifies that the relation $n = (s+1) \times (t+1) + u$ holds for the encoded numbers.
A one-dimensional cellular automaton can add and multiply $k$-digit numbers in time $O(k)$ \cite{atrubin}, hence our space $s$ will be sufficient.
\item Our grid already has the binary representations of $s$ and $t$ computed as the relations $B_H$ and $B_V$. In the case of $B_V$ the
computed $t$ is already where we need it (we only need to define the relation $E_{t}$ in the straightforward way). In the case of $B_H$ the
computed $s$ is in the rightmost column, so we add extra wiring relations $W$ to move it to the beginning of the initial tape. In the case of $u$,
we need to compute the binary representation of the number of rows $i$ such that $Q(0,i)$; this can be computed in the same way as we have computed
the number of all rows (using the relation $B_U$ similar to $B_V$).
\end{itemize}

\def\subfig#1#2{
\begin{minipage}[b]{.5\textwidth}\includegraphics[width=\textwidth]{#1}\begin{center}#2\end{center}
\end{minipage}
}

\begin{figure}[!ht]
\subfig{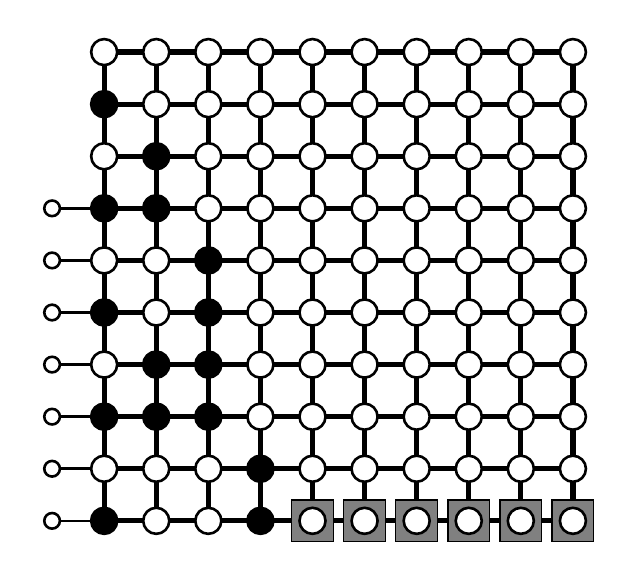}{a}\subfig{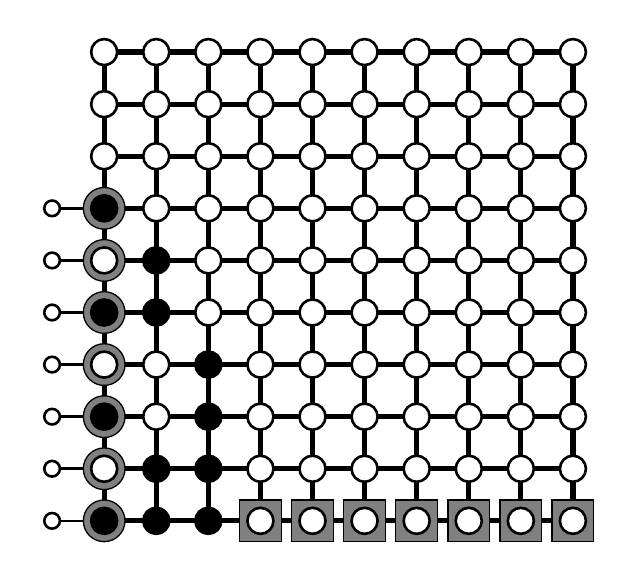}{b}

\subfig{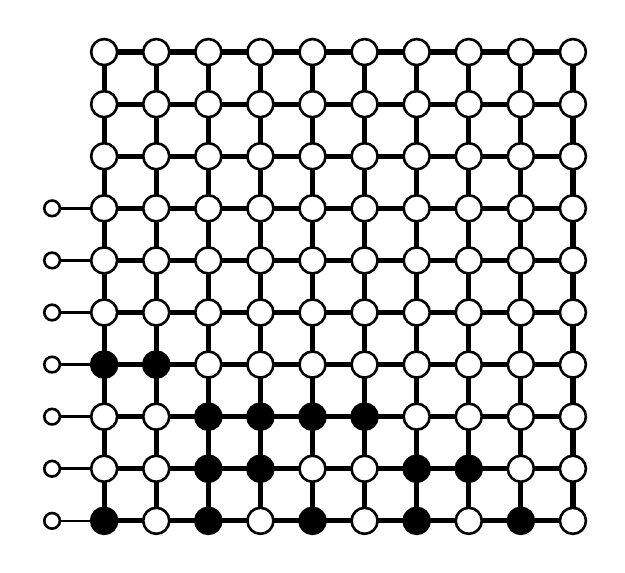}{c}\subfig{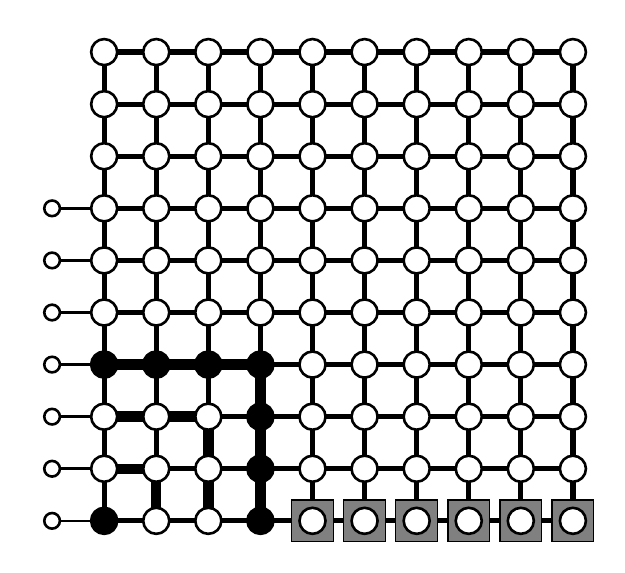}{d}
\caption{Computing the size of our model.\label{fig}}
\end{figure}

Figure \ref{fig} shows the elements of our construction. In all the pictures, the small circles are the extra elements (where $P$ holds), and the other elements are
the grid; the thin lines represent the relation $B$, the thick lines represent the relations $U$, $D$, $L$ and $R$. In \ref{fig}a the black circles
represent $B_H \equiv D_{t}$ and gray boxes represent $E_{t}$. In \ref{fig}b the gray circles represent $Q$, black circles represent $B_U$ and gray circles
represent $E_U$. In \ref{fig}c the black circles represent $B_V$, while in \ref{fig}d the extra thick lines represent $W$, black circles represent $E_{s}$,
and gray boxes represent $E_{s}$.

The formula $\phi$ will be the conjuction of the following axioms:
\begin{itemize}
\item (1) $\phi_3$, restricted to elements for which $P$ does not hold. This requires that we indeed have a rectangular grid.
\item (2) Axiomatiziations of the Turing machine $M$.
\item (3) $B$ is a bijection.
\item (4) The set of elements satisfying $Q$ has the correct shape: $Q(v) \mythen \neg P(v) \wedge (\neg\exists y L(x,y)) \wedge (\exists y D(x,y) \mythen Q(y))$, 
\item (5) Axiomatiziation of $B_U$, similar to the axiomatization of $B_V$, but where we add 1 only in the rows $y$ where $Q(0,y)$ holds.
\item (6) Axiomatiziation of the wiring $W$ moving $s$. The axioms are as follows:
$W(v,w) \wedge D_t(v) \mythen D_t(w)$; $W(v,w) \mythen W(w,v)$; every v is connected to either (a) only $R(v)$ and $L(v)$ is undefined, (b) only $L(v)$ and $R(v)$;
(c) only $L(v)$ and $D(v)$; (d) only $D(v)$ and $U(v)$; (e) only $U(v)$ and $D(v)$ is not defined; (f) nothing. Furthermore, in case (c), $L(D(v))$ must either
be also case (c) or the bottom left corner; $D_t(v) \myiff B_H(v)$ whenever $L(v)$ is undefined; and the case (a) holds whenever $L(v)$ undefined, $D_t(v)$, 
and $D(v)$ defined.
\item (7) $\forall v D_{t}(v) \myiff V_H(y)$.
\item (8) For every encoded number $a$, $E_a(v) \ra (\neg D_a(v) \wedge \exists w R(v,w) \ra E_a(w)$.
\item (9) Axiomatiziations of the automaton $M_2$.
\end{itemize}

Our model satisfies all these axioms.

On the other hand, suppose that $\phi$ has a model $G$ of size $n$. By (1) this model constists of a rectangular grid and 
a number of $u$ extra elements. By (3) and (4) the relation $Q$ is satisfied only for $u$ bottommost elements in the leftmost column.
By (5) the encoded number $u$ equals the number of these elements. By (6) and (7) the encoded numbers $s$ and $t$ equal the dimensions of
the grid. By (8) and (9) we know that the encoded number $n$ indeed equals the size of $G$. By (2) we know that $M$ accepts $n$, therefore
$n \in A$.
\end{proof}

\bibliographystyle{alpha}
\bibliography{spectra}

\begin{thebibliography}{DJMM12}

\bibitem[{Atr}65]{atrubin}
A.~J. {Atrubin}.
\newblock A one-dimensional real-time iterative multiplier.
\newblock {\em IEEE Transactions on Electronic Computers}, EC-14(3):394--399,
  June 1965.

\bibitem[Ber66]{berger}
Robert Berger.
\newblock {\em The Undecidability of the Domino Problem}.
\newblock American Mathematical Society, 1966.

\bibitem[DJMM12]{fiftyyears}
A.~Durand, N.~D. Jones, J.~A. Makowsky, and M.~More.
\newblock Fifty years of the spectrum problem: survey and new results. {In}
  preparation.
\newblock {\em Bulletin of Symbolic Logic}, 18, 2012.

\bibitem[DK17]{planarspectra}
Anuj Dawar and Eryk Kopczynski.
\newblock Bounded degree and planar spectra.
\newblock {\em Logical Methods in Computer Science}, 13(4), 2017.

\bibitem[Fag74]{fagin74}
R.~Fagin.
\newblock Generalized first-order spectra and polynomial-time recognizable
  sets.
\newblock {\em Complexity of computation (proc. siam-ams sympos. appl. math.,
  new york, 1973) (Providence, R.I.) (Richard M. Karp, editor), SIAM-AMS
  Proceedings, vol. 7, American Mathematical Society}, pages 43--73, 1974.

\bibitem[Han65]{hanf}
W.~Hanf.
\newblock Model-theoretic methods in the study of elementary logic.
\newblock {\em JW Addison et al.The Theory of Models, North-Holland,
  Amsterdam}, pages 132--145, 1965.

\bibitem[JR15]{jeandel}
Emmanuel Jeandel and Micha{\"e}l Rao.
\newblock An aperiodic set of 11 wang tiles.
\newblock {\em CoRR}, abs/1506.06492, 2015.

\bibitem[JS74]{JonesS74}
N.~D. Jones and A.~L. Selman.
\newblock Turing machines and the spectra of first-order formulas.
\newblock {\em Journal of Symbolic Logic}, 39:139--150, 1974.

\bibitem[Kop17]{ccotb}
Eryk Kopczynski.
\newblock Computational complexity on the blackboard.
\newblock {\em Fundam. Inform.}, 152:323--339, 2017.

\bibitem[Sch52]{scholz}
H.~Scholz.
\newblock Ein ungel\"ostes problem in der symbolischen logik.
\newblock {\em Journal of Symbolic Logic}, 17:160, 1952.

\end{thebibliography}

\end{document}